     \def\section{\@startsection{section}{1}%
     \z@{.7\linespacing\@plus\linespacing}{.5\linespacing}%
     {\bfseries
     \centering
     }}
     \def\@secnumfont{\bfseries}
\newtheorem{theorem}{Theorem}[section]
\theoremstyle{definition}
\theoremstyle{remark}
\numberwithin{equation}{section}
\begin{document}

\title[Steady filtration of Peng-Robinson gas in a porous medium]{Steady filtration of Peng-Robinson gases in a porous medium}

\author{Valentin Lychagin}
\address{Valentin Lychagin: V.A. Trapeznikov Institute of Control Sciences, Russian Academy of Sciences, 65 Profsoyuznaya Str., 117997 Moscow, Russia, Department of Mathematics, The Arctic University of Norway, Postboks 6050, Langnes 9037, Tromso, Norway}
\email{valentin.lychagin@uit.no}

\author{Mikhail Roop}
\address{Mikhail Roop: V.A. Trapeznikov Institute of Control Sciences, Russian Academy of Sciences, 65 Profsoyuznaya Str., 117997 Moscow, Russia, Department of Physics, Lomonosov Moscow State University, Leninskie Gory, 119991 Moscow, Russia}
\email{mihail\underline{ }roop@mail.ru}

\subjclass[2000] {76S05; 35Q35}

\keywords{phase transitions, filtration, Peng-Robinson gases}

\begin{abstract}
Filtration of real gases described by Peng-Robinson equations of state in 3-dimensional space is studied. Thermodynamic states are considered as either Legendrian submanifolds in contact space, or Lagrangian submanifolds in symplectic space. The correspondence between singularities of their projection on the plane of intensives and phase transitions is shown, and coexistence curves in various coordinates are constructed. A method of finding explicit solutions of the Dirichlet boundary problem is provided and the case of a number of sources is discussed in details. The domains corresponding to different phases are shown.
\end{abstract}

\maketitle

\section{Introduction}
A system of equations describing a steady filtration in a 3-dimensional porous medium consists of \cite{Lib,Mus, Sch}
\begin{itemize}
\item the Darcy law
\begin{equation}
\label{Darcy}\mathbf{u}=-\frac{k}{\mu}\nabla p,
\end{equation}
where $\mathbf{u}(x)=(u_{1},u_{2},u_{3})$ is the velocity field, $p(x)$ is the pressure, $x\in\mathbb{R}^{3}(x_{1},x_{2},x_{3})$, $k=\|k_{ij}\|$ is the permeability tensor, depending on the medium as well as the viscosity $\mu$. The Darcy law in form (\ref{Darcy}) is valid for one-component filtration, i.e. medium consists of only one sort of fluid or gas. Since we consider homogeneous medium, we put $k_{ij}=k\delta_{ij}$, where $\delta_{ij}$ is the Kronecker symbol.
\item the continuity equation
\begin{equation}
\label{Cont}
\mathrm{div}(\rho\mathbf{u})=0,
\end{equation}
where $\rho(x)$ is the density. Equation (\ref{Cont}) is responsible for the mass conservation law.
\end{itemize}
In addition to (\ref{Darcy}) and (\ref{Cont}) we assume that the specific entropy of the gas $\sigma(x)$ is constant along the trajectories of the velocity field $\mathbf{u}$:
\begin{equation}
\label{adiab}
(\mathbf{u},\nabla\sigma)=0.
\end{equation}

Note that system (\ref{Darcy})-(\ref{adiab}) is underdetermined. To make it complete, we need additional relations representing thermodynamic properties of the medium, i.e. equations of state. Filtration of ideal gases was investigated in \cite{LychGSA}. In this paper, we use one of the most popular in petroleum industry model of real gases, the Peng-Robinson model \cite{PR}. One of the most important properties of such gases is phase transitions. We obtain \textit{coexistence curves}, i.e. sets of points where phase transition occurs, for such gases in the space of thermodynamic variables and having solutions of the Dirichlet problem for (\ref{Darcy})-(\ref{adiab}) extended by Peng-Robinson equations of state, we can move these curves onto the space $\mathbb{R}^{3}(x_{1},x_{2},x_{3})$ and find the domains where phase transitions occur. Similar results have already been obtained by authors in \cite{GLRT} for the Navier-Stokes flows, in \cite{LRNon} for the van der Waals gases filtration and in \cite{LRJGP} for the Redlich-Kwong gases filtration.
\section{Thermodynamic state}
Here, we briefly recall (for details see \cite{LY, LRNon}) necessary constructions describing thermodynamic states by means of contact and symplectic geometry \cite{KLR}. Let us introduce a contact manifold $(\mathbb{R}^{5},\theta)$ with coordinates $(p,T,e,v,\sigma)$ standing for the pressure, the temperature, the specific energy, the specific volume $v=\rho^{-1}$ and the specific entropy respectively. The structure form $\theta$ represents the first law of thermodynamics:
\begin{equation*}\theta=d\sigma-T^{-1}de-T^{-1}pdv.\end{equation*}
In our consideration, the thermodynamic state is a 2-dimensional Legendrian submanifold $\widehat L\subset(\mathbb{R}^{5},\theta)$, i.e. maximal integral manifold of the form $\theta$:
\begin{equation*}\theta|_{\widehat L}=0,\end{equation*}
which means that the first law of thermodynamics holds on $\widehat L$.

Since thermodynamic state is defined by interplay of measurable quantities, it is reasonable to eliminate the specific entropy, which can be done by projection $\pi\colon\mathbb{R}^{5}\to\mathbb{R}^{4}$, $\pi(p,T,e,v,\sigma)=(p,T,e,v)$. The restriction of this projection on $\widehat L$ gives an immersed Lagrangian submanifold $L\subset(\mathbb{R}^{4},\Omega)$, such as $\Omega|_{L}=0$, where $\Omega$ is a symplectic form
\begin{equation*}\Omega=-d\theta=d(T^{-1})\wedge de+d(pT^{-1})\wedge dv.\end{equation*}
Thus, in $(\mathbb{R}^{4},\Omega)$ the Lagrangian submanifold $L$ is given by \textit{equations of state}:
\begin{equation*}L=\left\{f(p,T,e,v)=0,\hspace{1mm}g(p,T,e,v)=0\right\}.\end{equation*}
Then, the condition for $L$ to be Lagrangian is that the Poisson bracket $[f,g]$ between functions $f$ and $g$ with respect to structure form $\Omega$
\begin{equation*}[f,g]\hspace{1mm}\Omega\wedge\Omega=df\wedge dg\wedge\Omega\end{equation*}
vanishes on the surface $L$:
\begin{equation}\label{compat}[f,g]=0\text{ on }L.\end{equation}

We will consider gases with thermodynamic states obeying equations of state in the form
\begin{equation*}f(p,T,e,v)=p-A(v,T),\quad g(p,T,e,v)=e-B(v,T).\end{equation*}
The first equation is called \textit{thermic} equation of state, while the second one is called \textit{caloric} equation of state. Usually the thermic equation of state is derived from experiments, but the caloric one remains unknown. Using the compatibility condition (\ref{compat}) one can get the caloric equation. Moreover, the following theorem is valid \cite{LRNon}:
\begin{theorem}
The Legendrian submanifold $\widehat L$ is given by the Massieu-Plank potential $\phi(v,T)$:
\begin{equation}\label{LegMani}p=RT\phi_{v},\quad e=RT^{2}\phi_{T},\quad\sigma=R(\phi+T\phi_{T}),\end{equation}
where $R$ is the universal gas constant.
\end{theorem}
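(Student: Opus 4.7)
The plan is to work in the coordinates $(v,T)$ on $L$, compute the compatibility condition $[f,g]=0$ explicitly, identify it with a Schwarz-type closedness condition that guarantees the existence of a potential $\phi(v,T)$, and finally lift the Lagrangian description to the Legendrian one by integrating the first law for $\sigma$.

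First, since $L$ is given by $p=A(v,T)$ and $e=B(v,T)$, the pair $(v,T)$ parametrizes $L$. A direct computation of the bracket $[f,g]$ using $\Omega=d(T^{-1})\wedge de+d(pT^{-1})\wedge dv$ and the Leibniz rule, followed by restriction to $L$, will yield the single scalar relation
\begin{equation*}
B_{v}=TA_{T}-A.
\end{equation*}
The main technical step is this explicit computation of $[f,g]$ on $L$; everything else follows formally.

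Next, I rewrite the compatibility relation as a closedness condition. Dividing by $RT^{2}$ we see that
\begin{equation*}
\partial_{T}\!\left(\frac{A}{RT}\right)=\frac{TA_{T}-A}{RT^{2}}=\frac{B_{v}}{RT^{2}}=\partial_{v}\!\left(\frac{B}{RT^{2}}\right),
\end{equation*}
so the 1-form $\omega=\dfrac{A}{RT}\,dv+\dfrac{B}{RT^{2}}\,dT$ is closed on $L$. By the Poincar\'e lemma there exists, at least locally, a function $\phi(v,T)$ with $d\phi=\omega$, i.e.
\begin{equation*}
\phi_{v}=\frac{A}{RT},\qquad \phi_{T}=\frac{B}{RT^{2}},
\end{equation*}
which is exactly $p=RT\phi_{v}$ and $e=RT^{2}\phi_{T}$.

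Finally, to recover $\widehat L\subset(\mathbb{R}^{5},\theta)$ I integrate the restriction of the first law. On $L$ one has
\begin{equation*}
d\sigma=T^{-1}de+T^{-1}p\,dv=T^{-1}(B_{v}+A)\,dv+T^{-1}B_{T}\,dT,
\end{equation*}
and this 1-form is closed precisely because of the compatibility condition just used. Substituting $A=RT\phi_{v}$ and $B=RT^{2}\phi_{T}$ gives
\begin{equation*}
d\sigma=R(\phi_{v}+T\phi_{vT})\,dv+R(2\phi_{T}+T\phi_{TT})\,dT=d\bigl(R(\phi+T\phi_{T})\bigr),
\end{equation*}
so $\sigma=R(\phi+T\phi_{T})$ up to an additive constant, which can be absorbed into $\phi$. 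This completes the description of $\widehat L$ by the single Massieu--Plank potential $\phi(v,T)$. The only real computation is the bracket in step one; the remainder is an application of Poincar\'e's lemma and a direct verification.
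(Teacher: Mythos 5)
Your proposal is correct: I checked the bracket computation you only assert, and with $\Omega=-T^{-2}dT\wedge de+T^{-1}dp\wedge dv-pT^{-2}dT\wedge dv$ one indeed gets $[f,g]|_{L}\sim TA_{T}-A-B_{v}$, so the compatibility condition \eqref{compat} is exactly $B_{v}=TA_{T}-A$ (consistent with the paper's own use of it for Peng--Robinson, where it reduces to $B_{v}\left((v+b)^{2}-2b^{2}\right)=a$); the closedness of $\omega=\frac{A}{RT}dv+\frac{B}{RT^{2}}dT$, the Poincar\'e lemma, and the integration of $d\sigma=T^{-1}de+pT^{-1}dv$ then go through as you state. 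Note, however, that the paper does not prove this theorem itself but cites it from the reference on real-gas filtration, and the standard argument there is more direct than yours: one takes $(v,T)$ as coordinates on $\widehat L$, defines $\phi:=R^{-1}\left(\sigma-T^{-1}e\right)$ outright, and uses $\theta|_{\widehat L}=0$ to compute $d\phi=\frac{p}{RT}\,dv+\frac{e}{RT^{2}}\,dT$, which immediately yields $p=RT\phi_{v}$, $e=RT^{2}\phi_{T}$ and $\sigma=R(\phi+T\phi_{T})$ with no bracket computation and no appeal to the Poincar\'e lemma. Your route through the Lagrangian picture buys something the short argument hides: it shows that the thermic and caloric equations of state themselves (i.e.\ the functions $A$, $B$ subject only to $[f,g]=0$) force the existence of the potential, rather than presupposing the Legendrian lift with its entropy function; the cost is that you must actually carry out the bracket computation you deferred, and you should state explicitly that $(v,T)$ are assumed to be coordinates on $L$ (equivalently, that the projection to the $(v,T)$-plane is nondegenerate), since both your argument and the short one rely on it.
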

Thus, by a system of filtration equations we shall mean equations (\ref{Darcy})-(\ref{adiab}) extended by the Legendrian surface $\widehat L$.

In general, not all the points on $L$ are applicable. This means that not all possible combinations of thermodynamic variables satisfy the condition of thermodynamic stability. The set of applicable points is defined by means of the quadratic differential form $\kappa$ on $L$ \cite{LY}:
\begin{equation*}\kappa=d(T^{-1})\cdot de+d(pT^{-1})\cdot dv,\end{equation*}
which in terms of Massieu-Plank potential takes the following form:
\begin{equation*}R^{-1}\kappa=-(\phi_{TT}+2T^{-1}\phi_{T})dT\cdot dT+\phi_{vv}dv\cdot dv.\end{equation*}
\begin{theorem}
Applicable states on $L$ correspond to the set of points where the differential quadratic form $\kappa$ is negative and are given by inequalities:
\begin{equation*}\phi_{TT}+2T^{-1}\phi_{T}>0,\quad\phi_{vv}<0,\end{equation*}
or, equivalently,
\begin{equation*}e_{T}>0,\quad p_{v}<0.\end{equation*}
\end{theorem}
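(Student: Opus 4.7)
The plan is a direct calculation on the Legendrian surface parametrised by $(v,T)$, followed by a simple diagonalisation argument for the quadratic form.

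First, using (\ref{LegMani}), I would restrict the 1-forms appearing in $\kappa$ to $\widehat L$ (equivalently $L$), writing
\begin{equation*}
d(T^{-1})=-T^{-2}\,dT,\qquad d(pT^{-1})=d(R\phi_{v})=R\phi_{vv}\,dv+R\phi_{vT}\,dT,
\end{equation*}
and
\begin{equation*}
de=d(RT^{2}\phi_{T})=R(2T\phi_{T}+T^{2}\phi_{TT})\,dT+RT^{2}\phi_{Tv}\,dv.
\end{equation*}
Substituting into $\kappa=d(T^{-1})\cdot de+d(pT^{-1})\cdot dv$, the cross terms $dT\cdot dv$ proportional to $\phi_{Tv}$ cancel, and one is left with the diagonal expression
\begin{equation*}
R^{-1}\kappa=-(\phi_{TT}+2T^{-1}\phi_{T})\,dT\cdot dT+\phi_{vv}\,dv\cdot dv,
\end{equation*}
as stated in the excerpt.

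Since this form is already diagonal in the independent coordinates $(v,T)$ on $L$, the condition that $\kappa$ be negative definite is equivalent to the two sign conditions on its diagonal entries, namely
\begin{equation*}
\phi_{TT}+2T^{-1}\phi_{T}>0,\qquad \phi_{vv}<0.
\end{equation*}

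Finally, to pass to the variables $e$ and $p$, I would differentiate the caloric and thermic laws in (\ref{LegMani}) directly:
\begin{equation*}
e_{T}=R(2T\phi_{T}+T^{2}\phi_{TT})=RT^{2}\bigl(\phi_{TT}+2T^{-1}\phi_{T}\bigr),\qquad p_{v}=RT\phi_{vv}.
\end{equation*}
Since $R>0$ and $T>0$, these two identities give exactly the equivalences $e_{T}>0\Leftrightarrow\phi_{TT}+2T^{-1}\phi_{T}>0$ and $p_{v}<0\Leftrightarrow\phi_{vv}<0$, finishing the proof. There is no real obstacle here; the only point requiring minor care is verifying that the mixed $dT\cdot dv$ terms cancel, which follows from the equality of mixed partials $\phi_{Tv}=\phi_{vT}$ combined with the signs coming from the two summands in $\kappa$.
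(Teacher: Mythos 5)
Your proposal is correct and follows essentially the same route as the paper: restrict $\kappa$ to $L$ via the Massieu--Plank parametrization (\ref{LegMani}), observe that the mixed $dT\cdot dv$ terms cancel by equality of mixed partials so that $R^{-1}\kappa=-(\phi_{TT}+2T^{-1}\phi_{T})\,dT\cdot dT+\phi_{vv}\,dv\cdot dv$, and read off negativity from the diagonal coefficients, with $e_{T}=RT^{2}(\phi_{TT}+2T^{-1}\phi_{T})$ and $p_{v}=RT\phi_{vv}$ giving the equivalent formulation since $R,T>0$. No gaps.
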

Consequently, we may have two types of singular submanifolds on $L$:
\begin{itemize}
\item singular submanifold $\Sigma_{1}\subset L$, where the differential form $de\wedge dv$ degenerates:
\begin{equation*}\Sigma_{1}=\left\{\phi_{TT}+2T^{-1}\phi_{T}=0\right\}.\end{equation*}
In this case the projection of $L$ on the plane of extensive variables has singularities.
\item singular submanifold $\Sigma_{2}\subset L$, where the differential form $dp\wedge dT$ degenerates:
\begin{equation*}\Sigma_{2}=\left\{\phi_{vv}=0\right\}.\end{equation*}
In this case the projection of $L$ on the plane of intensive variables has singularities.
\end{itemize}
Singularities of the second type are of special interest for us. Let applicable domains on $L$ be separated by $\Sigma_{2}$ from the set of points where $\phi_{vv}>0$. This means that thermodynamic system has a number of \textit{phases}. A jump between two applicable points $a_{1}=(p,T,e_{1},v_{1})\in L$ and $a_{1}=(p,T,e_{2},v_{2})\in L$ corresponding to different phases, characterized by intensives $p$ and $T$ and the specific Gibbs free energy $\gamma=e-T\sigma+pv$ conservation law $\gamma(a_{1})=\gamma(a_{2})$ is what we call \textit{phase transition}.

One may show that the specific Gibbs free energy $\gamma$ can be expressed in terms of Massieu-Plank potential $\phi$ in the following way:
\begin{equation*}\gamma=RT(v\phi_{v}-\phi).\end{equation*}
From what follows, that points $(v_{1},T)$ and $(v_{2},T)$ can be found from equations:
\begin{equation}
\label{PhaseEqui}
\begin{split}
&\phi _{v}\left( v_{2},T\right)=\phi _{v}\left( v_{1},T\right),{}\\&
\phi \left( v_{2},T\right)- v_{2}\phi _{v}\left(
v_{2},T\right)= \phi \left( v_{1},T\right) -v_{1}\phi _{v}\left( v_{1},T\right).
\end{split}
\end{equation}
Equations (\ref{PhaseEqui}) allow to construct coexistence curves in $\mathbb{R}^{2}(v,T)$. Equivalent equations
\begin{equation}
\label{PhaseEqui1}
\begin{split}
&\phi _{v}\left( v_{2},T\right)=\frac{p}{RT},\quad\phi _{v}\left(
v_{1},T\right) =\frac{p}{RT},{}\\&
\phi \left( v_{2},T\right)- v_{2}\phi _{v}\left(
v_{2},T\right)= \phi \left( v_{1},T\right) -v_{1}\phi _{v}\left( v_{1},T\right),
\end{split}
\end{equation}
allow to get coexistence curves in $\mathbb{R}^{3}(p,v,T)$ and in $\mathbb{R}^{2}(p,T)$. Coexistence curves show where phase transition occurs on the Lagrangian surface $L$.

\section{Peng-Robinson gases}
Peng-Robinson state equation was proposed by D.Y. Peng and D. Robinson in \cite{PR}. It appeared to be a superior description of nonpolar materials and became of wide use in petroleum industry. The first state equation (\textit{thermic} state equation) is of the form:
\begin{equation}\label{therm}f(p,T,e,v)=p-\frac{RT}{v-b}+\frac{a}{(v+b)^{2}-2b^2},\end{equation}
where $a$ and $b$ are constants responsible for the interaction between particles and particles' volume respectively. To define the Lagrangian surface $L$ completely, we need one more equation, the \textit{caloric} state equation. It can be obtained by means of (\ref{compat}). Assuming $g(p,T,e,v)=e-B(v,T)$ and taking the restriction of the Poisson bracket $[f,g]|_{L}$ we get an equation for $B(v,T)$:
\begin{equation*}B_{v}\left((v+b)^{2}-2b^{2}\right)-a=0,\end{equation*}
with solution of the form
\begin{equation*}B(v,T)=F(T)+\frac{a\sqrt{2}}{4b}\ln\left(\frac{v+b-\sqrt{2}b}{v+b+\sqrt{2}b}\right).\end{equation*}
Since taking $a=0$, $b=0$ in (\ref{therm}) we get an ideal gas state equation, we have to put $F(T)=nRT/2$, where $n$ is the degree of freedom.

Thus,
\begin{equation}\label{cal}g(p,T,e,v)=e-\frac{nRT}{2}-\frac{a\sqrt{2}}{4b}\ln\left(\frac{v+b-\sqrt{2}b}{v+b+\sqrt{2}b}\right),\end{equation}
and the Lagrangian surface for Peng-Robinson gases is given by (\ref{therm}) and (\ref{cal}). It is easy to check that the Massieu-Plank potential for Peng-Robinson gases is
\begin{equation}\label{MassPlank}\phi(v,T)=\ln\left(T^{n/2}(v-b)\right)-\frac{a\sqrt{2}}{4bRT}\ln\left(\frac{(3-2\sqrt{2})(v\sqrt{2}+v-b)}{v\sqrt{2}-v+b}\right).\end{equation}
Using (\ref{LegMani}) one can show that the specific entropy $\sigma$ for Peng-Robinson gases is (up to a constant)
\begin{equation}\label{ent}\sigma(v,T)=R\ln\left(T^{n/2}(v-b)\right),\end{equation}
and the Legendrian surface $\widehat L$ is defined by (\ref{therm}), (\ref{cal}) and (\ref{ent}).

By introducing the scale contact transformation
\begin{equation*}p\mapsto\frac{a}{b^{2}}p,\quad T\mapsto\frac{a}{bR}T,\quad e\mapsto\frac{a}{b}e,\quad v\mapsto bv,\quad\sigma\mapsto R\sigma\end{equation*}
one gets the reduced form of Peng-Robinson equations of state:
\begin{equation}\label{PRred}\begin{split}&p=\frac{T}{v-1}-\frac{1}{(v+1)^{2}-2},\quad e=\frac{nT}{2}+\frac{\sqrt{2}}{4}\ln\left(\frac{v\sqrt{2}+v-1}{v\sqrt{2}-v+1}\right),{}\\&\sigma=\ln\left(T^{n/2}(v-1)\right).\end{split}\end{equation}
Note that since $p>0$ and $T>0$, we consider only $v>1$.

The Massieu-Plank potential takes the form:
\begin{equation*}\phi(v,T)=\ln\left(T^{n/2}(v-1)\right)-\frac{\sqrt{2}}{4T}\ln\left(\frac{(3-2\sqrt{2})(v\sqrt{2}+v-1)}{v\sqrt{2}-v+1}\right).\end{equation*}

The differential quadratic form $\kappa$ can be written in the following way:
\begin{equation*}R^{-1}\kappa=-\frac{n}{2T^{2}}dT\cdot dT-\frac{Tv^{4}+2(2T-1)v^{3}+2(T+1)v^{2}-2(2T-1)v+T-2}{T(v-1)^{2}(v^{2}+2v-1)^{2}}dv\cdot dv.\end{equation*}
Since $n/2T^{2}>0$, the applicable domain for Peng-Robinson gases is given by inequality
\begin{equation*}T>\frac{2(v+1)(v-1)^{2}}{(v^{2}+2v-1)^{2}}\end{equation*}
and is shown in Figure \ref{aplicPR}.

\begin{figure}[h!]
\centering
\includegraphics[scale=.35]{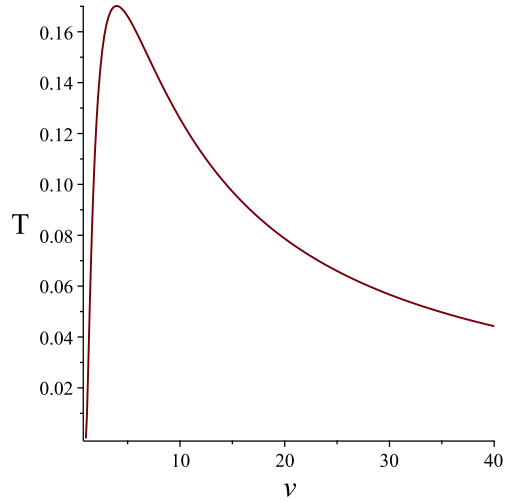}
\caption{Applicable domain for Peng-Robinson gases. The curve corresponds to the projection of singular submanifold $\Sigma_{2}\subset L$, where $\kappa$ changes its type. Applicable states are located above the curve.}
\label{aplicPR}       
\end{figure}

We can see that there is a \textit{critical} temperature $T_{0}$, and if $T>T_{0}$ there are no phase transitions.
\begin{theorem}
The critical temperature for Peng-Robinson gases $T_{0}$ and the corresponding critical volume $v_{0}$ are defined as follows:
\begin{equation*}v_{0}=1+2(4+2\sqrt{2})^{-1/3}+(4+2\sqrt{2})^{1/3},\quad T_{0}=\frac{2(v_{0}+1)(v_{0}-1)^{2}}{(v_{0}^{2}+2v_{0}-1)^{2}}.\end{equation*}
\end{theorem}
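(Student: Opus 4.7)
The plan is to identify the critical point with the maximum of the curve bounding the applicable domain, then reduce the condition for this maximum to a cubic equation in $v$ and solve it by Cardano's formula.

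First, I would recall that, by the discussion preceding the statement, the boundary of the applicable domain is the singular set $\Sigma_2 = \{\phi_{vv} = 0\}$, whose projection on the $(v,T)$ plane is the curve
\begin{equation*}
T(v) = \frac{2(v+1)(v-1)^{2}}{(v^{2}+2v-1)^{2}}, \qquad v>1.
\end{equation*}
Since phase transitions cease when the two branches of the coexistence curve defined by (\ref{PhaseEqui}) merge, and since $\Sigma_2$ is the locus at which $p_v = RT\phi_{vv}$ vanishes, the critical point is the point where this boundary curve attains its maximum; equivalently, it is the unique point of $\Sigma_2$ at which both $p_v = 0$ and $p_{vv}=0$.

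Next I would compute $dT/dv$ directly. Writing $N(v)=2(v+1)(v-1)^2$ and $D(v)=(v^2+2v-1)^2$, one finds $N'(v) = 2(v-1)(3v+1)$ and $D'(v) = 4(v+1)(v^2+2v-1)$, so the equation $N'D = ND'$, after cancelling the common factor $2(v-1)(v^2+2v-1)$, reduces to
\begin{equation*}
(3v+1)(v^{2}+2v-1) \;=\; 4(v+1)^{2}(v-1),
\end{equation*}
which expands and simplifies to the cubic
\begin{equation*}
v^{3}-3v^{2}-3v-3 \;=\; 0.
\end{equation*}
(As a cross-check, the same cubic arises from eliminating $T$ between $p_v = 0$ and $p_{vv} = 0$ using (\ref{PRred}).)

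Finally, I would apply the substitution $v = u+1$ to depress the cubic to $u^{3}-6u-8=0$. Its discriminant $-4(-6)^{3}-27(-8)^{2} = -864$ is negative, so there is a unique real root, given by Cardano's formula as
\begin{equation*}
u \;=\; \sqrt[3]{4+2\sqrt{2}}+\sqrt[3]{4-2\sqrt{2}}.
\end{equation*}
Using $(4+2\sqrt{2})(4-2\sqrt{2}) = 8$ to rewrite $\sqrt[3]{4-2\sqrt{2}} = 2\,(4+2\sqrt{2})^{-1/3}$ yields the stated expression for $v_0 = 1+u$, and substituting back into the boundary formula produces $T_0$. The main obstacle is the algebraic manipulation leading from $dT/dv = 0$ to a depressed cubic whose Cardano solution matches the form in the statement; everything else is routine.
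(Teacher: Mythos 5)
Your proposal is correct, and it follows exactly the route the paper implies: the paper states this theorem without a written proof, immediately after exhibiting the boundary curve $T=2(v+1)(v-1)^{2}/(v^{2}+2v-1)^{2}$ of the applicable domain, so the intended argument is precisely your maximization of that curve (equivalently $p_{v}=p_{vv}=0$), leading to $v^{3}-3v^{2}-3v-3=0$ and its Cardano solution. Your algebra checks out, including the reduction $(3v+1)(v^{2}+2v-1)=4(v+1)^{2}(v-1)$, the depressed cubic $u^{3}-6u-8=0$, and the rewriting via $(4+2\sqrt{2})(4-2\sqrt{2})=8$.
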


\subsection{Coexistence curves}
Coexistence curve is a level submanifold $\Gamma\subset L$ for the specific Gibbs potential $\gamma$. Such curves separate different phases of the medium and are given in terms of Massieu-Plank potential by (\ref{PhaseEqui}) and  (\ref{PhaseEqui1}). Resolving (\ref{PhaseEqui1}) with respect to $v_{1}$ and changing $v_{2}$ by $v$ we get $\Gamma\subset\mathbb{R}^{3}(p,v,T)$, which is shown in Figure \ref{coexPVT} and its projections $\Gamma_{1}\subset\mathbb{R}^{2}(p,T)$ and $\Gamma_{2}\subset\mathbb{R}^{2}(v,T)$, which are Figures \ref{coexPT} and \ref{coexVT} respectively.

\begin{figure}[h!]
\centering
\includegraphics[scale=.55]{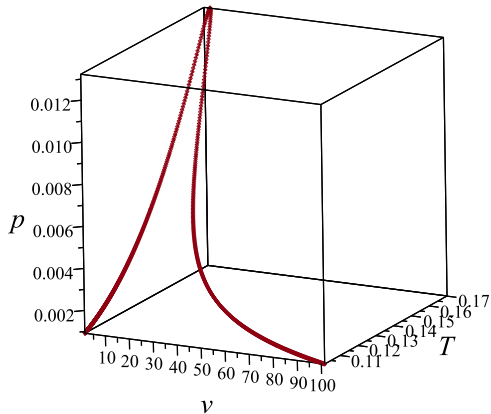}
\caption{Coexistence curve $\Gamma\subset\mathbb{R}^{3}(p,v,T)$ for Peng-Robinson gases.}
\label{coexPVT}       
\end{figure}

\begin{figure}[ht!]
\centering \subfigure[]{
\includegraphics[width=0.4\linewidth]{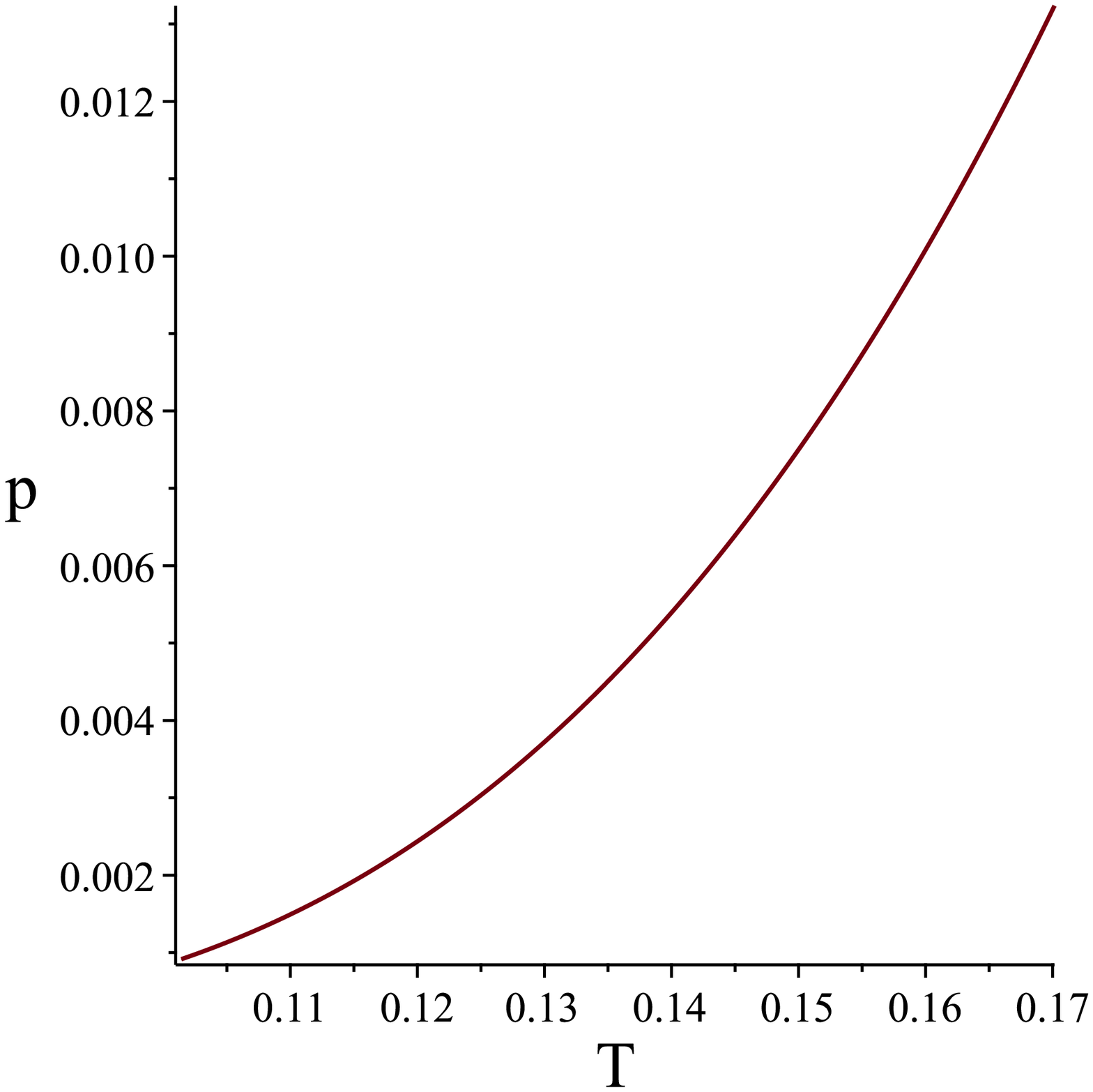} \label{coexPT} }
\hspace{4ex}
\subfigure[]{ \includegraphics[width=0.4\linewidth]{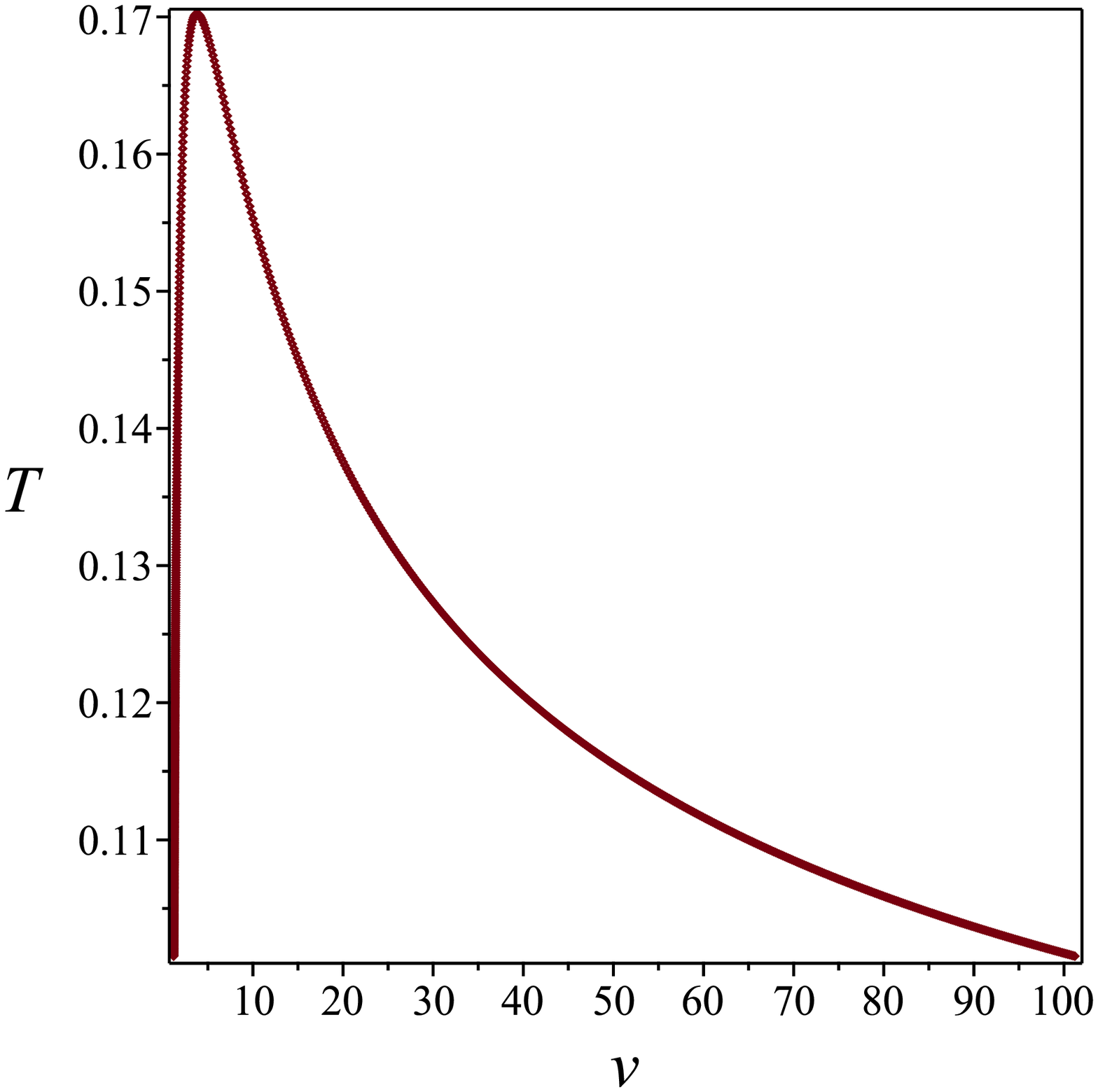} \label{coexVT} }
\caption{Coexistence curves for Peng-Robinson gases: \subref{coexPT} in $\mathbb{R}^{2}(p,T)$, liquid phase is on the left of the curve, gas phase is on the right; \subref{coexVT} in $\mathbb{R}^{2}(v,T)$, inside the curve --- intermediate state (condensation process).} \label{CoexCurves}
\end{figure}

\section{Solution of filtration equations}
Suppose that filtration domain $D\subset\mathbb{R}^{3}$ with a smooth boundary $\partial D$ contains a number $N$ of isotropic sources. If $N=1$, condition (\ref{adiab}) implies the constancy of the specific entropy, because stream lines intersect at the location of the source. Increasing the number of sources one gets one of two cases. Either stream lines intersect, which means that all the sources have to be of the same entropy, or not. The last means that sources may have different values of the specific entropy, but since stream lines do not intersect, filtrations with sources of different entropy are independent. Summarizing above discussion, we may say that the domain $D$ can be represented as a disjoint union of domains $D=\cup D_{k}$, where each of subsets $D_{k}$ contains sources with common specific entropy. Therefore, we restrict ourselves to consider adiabatic filtration with a given level of the specific entropy $\sigma_{0}$.

Fixed level of the specific entropy $\sigma_{0}$ and equations of state allow to express the temperature $T$ and the pressure $p$ as functions of the specific volume $v$. Namely, due to (\ref{LegMani}) we have:
\begin{equation}\label{entlev}\sigma_{0}=\phi+T\phi_{T}.\end{equation}
In applicable domain the derivative of the right-hand side of (\ref{entlev}) with respect to $T$ is positive, and $T(v)$ can be obtained as a root of (\ref{entlev}). Substituting the corresponding expression for $T(v)$ into the thermic equation of state, we get $p(v)$. General solution of (\ref{Darcy})-(\ref{adiab}) can be derived using the following theorem \cite{LRNon}:
\begin{theorem}
Equations of steady filtration (\ref{Darcy})-(\ref{adiab}) extended by equations of state are equivalent to the following equation:
\begin{equation*}\Delta(Q(v))=0,\end{equation*}
where
\begin{equation}\label{Qdef}Q(v)=-\int\frac{k}{v\mu}p^{\prime}(v)dv.\end{equation}
\end{theorem}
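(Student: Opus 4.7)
The plan is to use the fact that on each subdomain $D_{k}$ of constant specific entropy $\sigma_{0}$ (as described in the paragraph preceding the theorem), the adiabatic condition (\ref{adiab}) is automatically satisfied, and the equations of state combined with (\ref{entlev}) reduce every thermodynamic quantity to a function of $v$ alone: $T=T(v)$, $p=p(v)$, and $\rho=v^{-1}$. Thus the only remaining unknown in $D_{k}$ is the scalar field $v(x)$, and the task becomes showing that (\ref{Darcy})--(\ref{adiab}) together with the Legendrian surface $\widehat L$ are equivalent to a single scalar PDE for $v$.

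First I would substitute these relations into the Darcy law to obtain $\mathbf{u}=-(k/\mu)\,p'(v)\,\nabla v$, which exhibits $\mathbf{u}$ as proportional to $\nabla v$. Plugging this and $\rho=v^{-1}$ into the continuity equation (\ref{Cont}) yields
\[
\mathrm{div}\!\left(-\frac{k\,p'(v)}{v\mu}\,\nabla v\right)=0.
\]
By the definition (\ref{Qdef}) one has $Q'(v)=-k\,p'(v)/(v\mu)$, so the vector inside the divergence is exactly $\nabla Q(v)=Q'(v)\nabla v$, and the last equation collapses to $\Delta Q(v)=0$. This settles the forward direction.

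For the converse, starting from a function of the form $Q(v(x))$ satisfying $\Delta Q(v)=0$, I would invert $Q$ to recover $v(x)$, then define $T(v)$, $p(v)$, $\rho=v^{-1}$, $\sigma\equiv\sigma_{0}$ and $\mathbf{u}=-(k/\mu)\nabla p(v)$; reading the chain of identities above in reverse shows that the full system (\ref{Darcy})--(\ref{adiab}) extended by $\widehat L$ is then verified. The one point that requires care, and which I expect to be the main obstacle, is the invertibility of $Q$, i.e.\ the requirement that $Q'(v)\neq 0$ on the applicable domain. This reduces to $p'(v)\neq 0$ along the isentrope $\sigma=\sigma_{0}$, and follows from the thermodynamic stability inequalities $p_{v}<0$, $e_{T}>0$: differentiating (\ref{entlev}) implicitly along $\sigma=\sigma_{0}$ yields an explicit expression for $T'(v)$ in terms of $\phi_{v}$, $\phi_{T}$, $\phi_{TT}$, from which $p'(v)$ along the isentrope inherits a definite nonzero sign wherever the state is thermodynamically applicable, so $Q$ is strictly monotone and the reconstruction is unambiguous.
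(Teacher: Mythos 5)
Your proposal is correct and is essentially the intended argument: the paper itself does not reprove this theorem (it is quoted from the reference \cite{LRNon}), and the proof there is exactly the direct substitution you perform --- on a fixed entropy level $\sigma=\sigma_{0}$ the state equations give $p=p(v)$, $\rho=v^{-1}$, so the Darcy law yields $\rho\mathbf{u}=-\frac{k}{v\mu}p^{\prime}(v)\nabla v=\nabla Q(v)$ and the continuity equation becomes $\Delta(Q(v))=0$, while (\ref{adiab}) holds identically. One small remark: invertibility of $Q$ is not actually needed for the stated equivalence, since the unknown of the reduced equation is the field $v(x)$ itself (one just reads the substitutions backwards); your stability argument, via $\left(\partial p/\partial v\right)_{\sigma}=p_{v}-Tp_{T}^{2}/e_{T}<0$, correctly gives monotonicity of $Q$ on the applicable domain, but global invertibility for all $v>1$ can fail across the two-phase region --- which is precisely why the paper's later theorem imposes the separate condition on $s_{0}$ before using the representation (\ref{solsource}).
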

Thus, the Dirichlet problem for (\ref{Darcy})-(\ref{adiab}) has the following form:
\begin{equation}\label{Dir}\Delta(Q(v(x)))=0,\quad v|_{\partial D}=v_{0}.\end{equation}
Suppose that $N$ sources with given intensities $J_{i}$ are located at points $a_{i}\in D$, $i=\overline{1,N}$. According to the above theorem, a solution of (\ref{Dir}) (in general, multivalued) can be expressed explicitly by means of a harmonic in $D\setminus\left\{a_{i}\right\}$ function $u(x)$:
\begin{equation}\label{solsource}Q(v(x))=\sum\limits_{i=1}^{N}\frac{J_{i}}{4\pi|x-a_{i}|}+u(x),\quad u|_{\partial D}=Q(v_{0})-\sum\limits_{i=1}^{N}\left.\frac{J_{i}}{4\pi|x-a_{i}|}\right|_{\partial D}.\end{equation}

\section{Peng-Robinson gases filtration}
For Peng-Robinson gases expressions for $T(v)$ and $p(v)$ are following:
\begin{equation*}T(v)=s_{0}(v-1)^{-2/n},\quad p(v)=s_{0}(v-1)^{-1-2/n}-\frac{1}{(v+1)^{2}-2},\end{equation*}
where $s_{0}=\exp(2\sigma_{0}/n)$.

Due to (\ref{solsource}), a problem of invertibility of $Q(v)$ needs to be investigated. We need to find a specific entropy level $s_{0}$, such that $Q(v)$, including $s_{0}$ as a parameter, is invertible for any $v>1$.
\begin{theorem}
The function $Q(v)$ is invertible if the specific entropy constant $s_{0}$ satisfies the inequality:
\begin{equation*}s_{0}>\frac{2n(v_{0}+1)(v_{0}-1)^{2+2/n}}{(n+2)(v_{0}+2v_{0}-1)^{2}},\end{equation*}
where $v_{0}$ is the root of the equation:
\begin{equation}\label{pol}(2-n)v^{3}+3(n+2)v^{2}+(3n+2)v+3n-2=0.\end{equation}
There exists a real root of (\ref{pol}) $v_{0}>1$.
\end{theorem}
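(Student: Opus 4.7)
The plan is to convert invertibility of $Q$ into strict monotonicity of $p$, rewrite the resulting pointwise inequality as an upper bound condition on an explicit function of $v$, and locate the extremizer via a logarithmic derivative that collapses to the cubic~(\ref{pol}).

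First, differentiating~(\ref{Qdef}) gives $Q'(v)=-\frac{k}{v\mu}p'(v)$, so invertibility of $Q$ on $(1,\infty)$ is equivalent to $p'(v)$ having constant sign there. Using the explicit expressions for $T(v)$ and $p(v)$ given at the start of this section, a direct differentiation produces
\begin{equation*}
p'(v)=-\frac{(n+2)s_{0}}{n}(v-1)^{-2-2/n}+\frac{2(v+1)}{(v^{2}+2v-1)^{2}}.
\end{equation*}
Since the first term drives $p'\to -\infty$ as $v\to 1^{+}$, constancy of sign forces $p'(v)<0$ throughout $(1,\infty)$. Rearranging, this is the pointwise condition $s_{0}>h(v)$ for all $v>1$, where
\begin{equation*}
h(v)=\frac{2n(v+1)(v-1)^{2+2/n}}{(n+2)(v^{2}+2v-1)^{2}}.
\end{equation*}

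Next I would reduce to a single critical value. The factor $(v-1)^{2+2/n}$ drives $h\to 0$ as $v\to 1^{+}$, and under the physical assumption $n>2$ a degree count gives $h(v)\sim \frac{2n}{n+2}v^{-1+2/n}\to 0$ as $v\to\infty$. Hence the supremum of $h$ over $(1,\infty)$ is attained at an interior critical point $v_{0}$, and the required condition collapses to $s_{0}>h(v_{0})$. To pin down $v_{0}$ I would use the equation $(\ln h)'(v)=0$, namely
\begin{equation*}
\frac{1}{v+1}+\frac{2+2/n}{v-1}-\frac{4(v+1)}{v^{2}+2v-1}=0.
\end{equation*}
Multiplying through by $n(v+1)(v-1)(v^{2}+2v-1)$ and expanding the three products $(v-1)(v^{2}+2v-1)$, $(v+1)(v^{2}+2v-1)$, $(v+1)^{2}(v-1)$ yields precisely the cubic~(\ref{pol}); substituting the resulting $v_{0}$ back into $h$ gives the explicit bound claimed in the theorem. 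The main technical obstacle is this expansion: it is routine but easy to mis-collect, so I would organize it coefficient by coefficient in $v^{j}$.

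Finally, for the existence of a root $v_{0}>1$ of (\ref{pol}), evaluating at $v=1$ gives
\begin{equation*}
(2-n)+3(n+2)+(3n+2)+(3n-2)=8(n+1)>0,
\end{equation*}
while for $n>2$ the leading coefficient $2-n$ is negative, so the cubic tends to $-\infty$ as $v\to\infty$. The intermediate value theorem then supplies a root in $(1,\infty)$, and Descartes' rule of signs applied to the sign pattern $(-,+,+,+)$ confirms that it is the unique positive root; it therefore coincides with the unique interior maximizer of $h$ identified above, so the bound $s_{0}>h(v_{0})$ is unambiguous.
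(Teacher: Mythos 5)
Your proposal is correct and follows essentially the same route as the paper: reduce invertibility of $Q$ to the sign of $p'(v)$, rewrite the condition as $s_{0}>\frac{2n}{n+2}w(v)$ with $w(v)=\frac{(v+1)(v-1)^{2+2/n}}{(v^{2}+2v-1)^{2}}$, locate the maximizer via $w'(v)=0$, which yields the cubic (\ref{pol}), and get a root $v_{0}>1$ from the sign change $P(1)=8(n+1)>0$, $P(+\infty)=-\infty$. Your additional details (the explicit limits of $h$ at $v\to1^{+}$ and $v\to\infty$ guaranteeing an interior maximum, and Descartes' rule for uniqueness of the positive root) only sharpen the argument the paper gives more tersely.
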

\begin{proof}
First of all, the invertibility condition for $Q(v)$ coincides with the condition of monotony. Function $Q(v)$ is monotonic if $Q^{\prime}(v)\ne0$ for any $v>1$. But due to (\ref{Qdef}) $Q^{\prime}(v)=0\Leftrightarrow p^{\prime}(v)=0$, which can be written as
\begin{equation*}\frac{s_{0}(n+2)}{2n}=w(v),\end{equation*}
where
\begin{equation*}w(v)=\frac{(v+1)(v-1)^{2/n+2}}{(v^{2}+2v-1)^{2}}.\end{equation*}
Therefore, if $s_{0}>2n(n+2)^{-1}\max\limits_{v>1} w(v)$, then $Q(v)$ is invertible. Condition $w^{\prime}(v)=0$ takes the form
\begin{equation*}P(v)=(2-n)v^{3}+3(n+2)v^{2}+(3n+2)v+3n-2=0,\end{equation*}
and the first part of the theorem is proved. Since the degree of freedom $n\ge3$, $P(+\infty)=-\infty$, $P(1)=8(n+1)>0$, from what follows the validity of the second statement of the theorem.
\end{proof}
Having an explicit solution given by (\ref{solsource}) and coexistence curves in the space of thermodynamic variables, shown above, we can construct submanifolds in $\mathbb{R}^{3}(x_{1},x_{2},x_{3})$ where phase transition occurs. The distribution of phases in space for $N=5$ sources located on a plane $x_{3}=0$ is presented in Figure \ref{phasespace}.

\begin{figure}[h!]
\centering
\includegraphics[scale=.35]{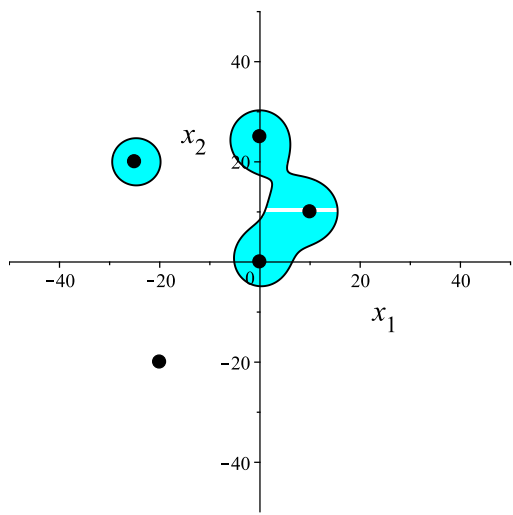}
\caption{The distribution of phases for Peng-Robinson gases. Coloured domain corresponds to the condensation process, black points are the sources.}
\label{phasespace}       
\end{figure}

The velocity field is shown in Figure \ref{velfield}.

\begin{figure}[h!]
\centering
\includegraphics[scale=.35]{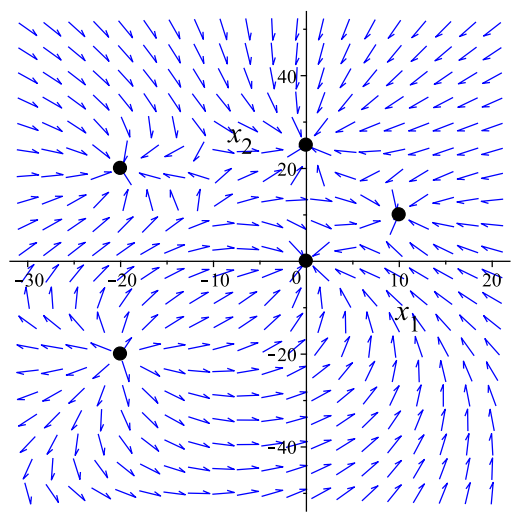}
\caption{The velocity field}
\label{velfield}       
\end{figure}

\par\bigskip\noindent
{\bf Acknowledgment.} This work was supported by the Russian Foundation for Basic Research (project 18-29-10013).

\bibliographystyle{amsplain}


\end{document}